\newtheorem{theorem}{Theorem}
\DeclareMathOperator*{\argmin}{arg\,min}
\begin{document}

\title{Efficient Evaluation of the Probability of Error of Random Coding Ensembles} 

%\title{Efficient non-asymptotic achievability bounds based on minimum distance decoding} 

%\author{Ioannis Papoutsidakis, Robert J. Piechocki, and Angela Doufexi}
%\author{%
%  \IEEEauthorblockN{Anonymous Authors}
%  \IEEEauthorblockA{%
%    Please do NOT provide authors' names and affiliations\\
%    in the paper submitted for review, but keep this placeholder.\\
%    ISIT23 follows a \textbf{double-blind reviewing policy}.}
%}

\author{%
  \IEEEauthorblockN{Ioannis Papoutsidakis, Angela Doufexi, and Robert J. Piechocki}\\
  \IEEEauthorblockA{Communication Systems and Networks Group\\ 
  					Department of Electrical and Electronic Engineering\\
                    University of Bristol\\ 
                    Bristol, BS8 1UB, UK\\
                    Email: \{ioannis.papoutsidakis, a.doufexi, r.j.piechocki\}@bristol.ac.uk}
}

\maketitle

\begin{abstract}
This paper presents an achievability bound that evaluates the exact probability of error of an ensemble of random codes that are decoded by a minimum distance decoder. Compared to the state-of-the-art which demands exponential computation time, this bound is evaluated in polynomial time. This improvement in complexity is also attainable for the original random coding bound that utilizes an information density decoder. The general bound is particularized for the binary symmetric channel, the binary erasure channel, and the Gaussian channel.
\end{abstract}

\section{Introduction}

The error rate of the optimal code of blocklength $n$ is an interesting subject of information theory for many decades. It is a practically important problem because it captures the tradeoff between the rate and the delay and complexity of a communication system.

One of the first works that address it is Shannon's for the Gaussian channel, where lower and upper bounds on the probability of error of the optimal code are introduced based on random coding and sphere packing \cite{6767457}. These results are numerically evaluated and studied in \cite{slep} for $n$ up to $100$. More recently, the work of Polyanskiy et al. \cite{poly} introduced several achievability bounds based on a maximum likelihood decoder that maximizes the information density metric. Specifically, for discrete memoryless channels, they derive the random coding (RC) bound which evaluates the exact probability of error of an ensemble of random codes, the random coding union (RCU) bound which is an efficient relaxation of RC bound, and the dependence testing (DT) bound which relates to binary hypothesis testing.

The RC bound evaluates the exact probability of error of an ensemble of random codes. As a result, it is expected to be tighter than RCU and DT bounds \cite{6620522}. Despite this fact, it is not generally preferred due to its exponential complexity. The efficient calculation of RC bound is a very interesting open problem.

The utilization of information density is prevalent in the aforementioned results as well as in many finite-blocklength results in the literature. Nevertheless, for several important channels such as the binary symmetric channel (BSC), binary erasure channel (BEC), and the Gaussian channel it is well known that minimum distance decoding is equivalent to maximum likelihood decoding. The relation of these metrics becomes apparent in the cases of BSC and BEC, since information density is a function of minimum distance \cite{poly}. 
An achievability bound that is based on minimum distance decoding is of interest because it allows the utilization of well-known results from probability theory, especially for the Gaussian channel.

The current paper provides an efficient RC bound that is based on minimum distance decoding and can be evaluated in polynomial time. Specifically, we provide an alternative form that avoids a sum with exponentially many terms and can be also used with the information density metric. Furthermore, for the case where the distance metric is continuous, we show how the RC bound is simplified. We particularize the general result for BSC and BEC and discuss how to efficiently evaluate them by dealing with specific computational challenges. Finally, the mathematical expression of the bound is given for the Gaussian channel with a constraint on average power.

In section \ref{notation}, the original RC bound is presented as well as the notation we follow throughout this paper. The main results are given in section \ref{results}. The particularization of the main result for the BSC, BEC, and the Gaussian channel is given in sections \ref{binary} and \ref{gaussian}, respectively. The final remarks and conclusions are made in section \ref{concl}.

\section{Background and Notation}
\label{notation}
Let us consider a channel with input alphabet $A$ and output alphabet $B$ with a conditional probability $P_{Y|X}:A\mapsto B$. An arbitrary codebook for this channel is denoted as $(c_1,...c_M)\in A^M$ where $M$ is the codebook size. The information density for a joint distribution $P_{XY}$ on $A \times B$ is
\begin{align}
i(x;y) = \log \frac{dP_{Y|X=x}}{dP_Y}(y).
\end{align}
We give in this section the original RC bound since the main results are derived based on its modification.
\begin{theorem}
Denote by $\epsilon(c_1,...,c_M)$ the error probability achieved by the maximum likelihood decoder with codebook $(c_1,...,c_M)$. Let $X_1,...,X_M$ be independent with marginal distribution $P_X$. Then 
\begin{align}
\mathbb{E}[\epsilon(X_1,...,X_M)]=1-\sum_{l=0}^{M-1} \binom{M-1}{l} \frac{1}{1+l}\mathbb{E}\big[w^lz^{M-1-l}\big]
\end{align}
where
\begin{align}
w=P(i(\bar{X};Y)=i(X;Y)|X,Y)
\end{align}
\begin{align}
z=P(i(\bar{X};Y)<i(X;Y)|X,Y)
\end{align}
with
\begin{align}
P_{XY\bar{X}}(a,b,c)=P_X(a)P_{Y|X}(b|a)P_X(c).
\end{align}
\end{theorem}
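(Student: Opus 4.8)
The plan is to compute the expected error probability by conditioning on the transmitted codeword and the received symbol, then averaging over the randomness of the competing codewords. I would fix that the first codeword $X_1$ is transmitted (by symmetry all $M$ codewords contribute equally, so it suffices to analyze one), and condition on the realized pair $(X,Y)=(X_1,Y)$. Given this pair, the maximum likelihood decoder errs unless $X_1$ strictly maximizes the information density $i(\cdot;Y)$ among all $M$ codewords; the subtlety is the tie-handling convention, since ML decoding must break ties among codewords that achieve the same maximal metric. The natural convention (and the one implicit in the factor $1/(1+l)$) is that ties are resolved uniformly at random, so if $X_1$ is tied with $l$ other codewords for the maximum, it is decoded correctly with probability $1/(1+l)$.

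First I would define, conditionally on $(X,Y)$, the three events for each competing codeword $\bar{X}$ (distributed as $P_X$, independent of everything): the metric is strictly larger, strictly smaller, or equal. Let $z=P(i(\bar X;Y)<i(X;Y)\mid X,Y)$ and $w=P(i(\bar X;Y)=i(X;Y)\mid X,Y)$ be the probabilities of the latter two events, so that the probability of the competing metric strictly exceeding the transmitted one is $1-w-z$. The decoding is correct only when none of the $M-1$ competitors has a strictly larger metric; among the competitors that tie, a random choice is made. Next I would condition on the number $l$ of the $M-1$ competing codewords whose metric exactly equals $i(X;Y)$, requiring simultaneously that the remaining $M-1-l$ competitors all have strictly smaller metric.

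The combinatorial core is then a multinomial-style count: the probability that exactly $l$ of the $M-1$ i.i.d. competitors tie and the other $M-1-l$ fall strictly below is $\binom{M-1}{l} w^{l} z^{M-1-l}$, conditionally on $(X,Y)$. In this event, correct decoding occurs with probability $1/(1+l)$ under uniform tie-breaking. Summing over $l$ and taking the expectation over $(X,Y)$ yields the conditional probability of correct decoding, and the overall probability of correct decoding is
\begin{align}
\mathbb{E}[1-\epsilon(X_1,\dots,X_M)]=\sum_{l=0}^{M-1}\binom{M-1}{l}\frac{1}{1+l}\,\mathbb{E}\bigl[w^{l}z^{M-1-l}\bigr].
\end{align}
Subtracting from one gives the claimed expression for $\mathbb{E}[\epsilon(X_1,\dots,X_M)]$.

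The main obstacle I anticipate is justifying the tie-breaking accounting rigorously, namely that conditionally on $(X,Y)$ the competing codewords are i.i.d. with the stated probabilities $w$ and $z$, so that the count of ties is genuinely binomial and the uniform-random selection yields exactly the factor $1/(1+l)$. This requires the product structure $P_{XY\bar X}(a,b,c)=P_X(a)P_{Y|X}(b|a)P_X(c)$, which guarantees that each competitor is independent of the channel realization given the transmitted codeword, so that $w$ and $z$ are well-defined functions of $(X,Y)$ and the conditional independence across the $M-1$ competitors holds. Once this independence is established, the binomial expansion and the symmetry over which codeword was sent make the rest routine.
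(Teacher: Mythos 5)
Your proposal is correct and follows essentially the same route the paper takes: conditioning on the transmitted pair $(X,Y)$, using the i.i.d.\ structure of the competing codewords to get the binomial count $\binom{M-1}{l}w^l z^{M-1-l}$ of ties versus strictly smaller metrics, applying the uniform tie-breaking factor $\frac{1}{1+l}$, and averaging over $P_{XY}$. This is exactly the argument the paper adapts (with distance in place of information density) in its proof of Theorem~\ref{rc}, which then only adds a closed-form simplification not needed for this statement.
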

Observe that this bound requires the evaluation of a sum with $M$ terms where $M$ grows exponentially with blocklength $n$ and rate $R$ measured in bits/channel use,  since $M=2^{nR}$. As a result, its evaluation is difficult unless $M$ is small enough. The high complexity of RC bound makes the relaxations provided by RCU and DT bounds practical.

Throughout this paper, blocklength is denoted with $n$. The multivariate normal distribution is denoted with $\mathcal{N}(\mu,\Sigma)$, where $\mu$ is the mean vector and $\Sigma$ is the covariance matrix. $\textbf{I}_m$ stands for the $m \times m$ identity matrix. The probability density function (pdf) of the gamma distribution is denoted with $f_\Gamma(x;\kappa,\theta)$, where $\kappa$ is the shape parameter and $\theta$ is the scale parameter. The pdf of the non-central chi-squared distribution with degrees of freedom $\kappa$ and non-centrality parameter $\lambda$ is denoted with $f_{X^2}(x;\kappa,\lambda)$. The cumulative distribution functions (cdf) are denoted analogously but with an upper case function name, e.g. $F_{X^2}(x;\kappa,\lambda)$.

\section{Random coding bound}
Random coding is a widely used tool in coding and information theory. It is also the main tool we use to derive the main results.
\label{results}
\begin{theorem}
Denote by $\epsilon(c_1,...,c_M)$ the error probability achieved by the minimum distance decoder with codebook $(c_1,...,c_M)$. Let $X_1,...,X_M$ be independent with marginal distribution $P_X$. Then 
\begin{align}
\mathbb{E}[\epsilon(X_1,...,X_M)]=1-\mathbb{E}\bigg[\frac{(w+z)^M-z^M}{wM}\bigg]
\label{boundrc}
\end{align}
where
\begin{align}
w=P(d(\bar{X},Y)=d(X,Y)|X,Y)
\end{align}
\begin{align}
z=P(d(\bar{X},Y)>d(X,Y)|X,Y)
\end{align}
with
\begin{align}
P_{XY\bar{X}}(a,b,c)=P_X(a)P_{Y|X}(b|a)P_X(c).
\end{align}
\label{rc}
\end{theorem}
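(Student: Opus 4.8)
The plan is to mimic the structure of the original RC bound (the first theorem) but to carry out the combinatorial sum in closed form, exploiting the fact that the minimum distance decoder's tie-breaking and error events are governed by the same two quantities $w$ and $z$. The key observation is that, conditioned on a transmitted codeword $X$ and received $Y$, the decoder errs (or ties) depending on how many of the $M-1$ competing independent codewords $\bar{X}$ land strictly closer to $Y$, exactly equal in distance, or strictly farther. I would first write down the conditional probability of correct decoding for a fixed $(X,Y)$ by conditioning on the number $l$ of competitors that tie and assuming (as in the original theorem's random tie-breaking) that a tie among $l+1$ equidistant codewords is resolved uniformly, contributing a factor $1/(1+l)$. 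This reproduces the inner combinatorial structure of the first theorem, giving a conditional success probability of the form $\sum_{l=0}^{M-1}\binom{M-1}{l}\frac{1}{1+l}w^l z^{M-1-l}$.

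The central step is then to evaluate this sum in closed form. I would use the identity $\binom{M-1}{l}\frac{1}{1+l}=\frac{1}{M}\binom{M}{l+1}$, which converts the weighted binomial sum into an ordinary binomial expansion. Substituting and reindexing with $k=l+1$ turns the sum into $\frac{1}{wM}\sum_{k=1}^{M}\binom{M}{k}w^k z^{M-k}$, and completing this to a full binomial sum by adding and subtracting the $k=0$ term yields $\frac{1}{wM}\big[(w+z)^M - z^M\big]$. Taking the expectation over $(X,Y)$ and subtracting from $1$ then gives exactly the claimed expression in equation (\ref{boundrc}). The definitions of $w$ and $z$ must be read off carefully: since a \emph{minimum} distance decoder prefers the codeword at smallest distance, a competitor $\bar{X}$ causes no harm precisely when $d(\bar{X},Y) > d(X,Y)$, which is why $z$ is a \emph{greater-than} probability here, in contrast to the \emph{less-than} information-density event in the first theorem.

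The main obstacle I anticipate is justifying the tie-breaking weight $1/(1+l)$ rigorously and handling the boundary case $w=0$. For the tie factor, I would argue that among the $l$ tying competitors together with the true codeword there are $l+1$ equidistant candidates, and a symmetric (uniform random) decoder selects the correct one with probability $1/(1+l)$; the independence and identical distribution of the $X_i$ make this exchangeability argument clean. The expression $(w+z)^M-z^M$ divided by $wM$ is singular at $w=0$, so I would verify that the combinatorial sum itself is continuous there: when $w=0$ only the $l=0$ term survives, giving success probability $z^{M-1}$, and one checks that $\lim_{w\to 0}\frac{(w+z)^M-z^M}{wM}=z^{M-1}$ by a first-order Taylor expansion, so the closed form extends continuously and the expectation is well defined. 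Once the algebraic identity and this limiting behaviour are settled, the result follows directly.
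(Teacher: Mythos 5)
Your proposal is correct and follows essentially the same route as the paper's own proof: decompose the conditional probability of correct decoding by the number $l$ of tying competitors with the $1/(1+l)$ tie-breaking factor, apply the absorption identity $\binom{M-1}{l}\frac{1}{1+l}=\frac{1}{M}\binom{M}{l+1}$, reindex and invoke the binomial theorem to get $\frac{(w+z)^M-z^M}{wM}$, then average over $(X,Y)$. Your extra care about the singularity at $w=0$ (limit $z^{M-1}$) is a sound addition that the paper defers to its separate continuous-channel theorem, where the same limit is computed via L'Hospital's rule.
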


\begin{proof}
Initially, the proof follows similar steps as the proof of \cite[Theorem 15]{poly}. Upon reception of channel output $y$ and given the codebook is $(c_1,...,c_M)$, the minimum distance decoder estimates the transmitted message
\begin{align} 
\hat{m}=\argmin_{i=1,...,M} d(c_i,y).
\end{align}
Assume, without loss of generality, that $m=1$ and the corresponding codeword is $c_1$. This estimation is correct with probability $\frac{1}{1+l}$ if
\begin{align}
\sum_{j=2}^M 1\{d(c_j,y)=d(c_1,y) \}=l \text{ and}
\end{align}
\begin{align}
\sum_{j=2}^M 1\{d(c_j,y)<d(c_1,y) \}=0
\label{ertra}
\end{align}
for $l=0,...,M-1$. If (\ref{ertra}) is not satisfied then an error occurs with absolute certainty. Let,
\begin{align}
w=P(d(\bar{X},y)=d(c_1,y)) \text{ and}
\end{align}
\begin{align}
z=P(d(\bar{X},y)>d(c_1,y))
\end{align}
where $\bar{X}$ is an arbitrary codeword other than $c_1$ and $y$ is the channel output. Since the codewords are independent and identically distributed the joint distribution of the remaining codewords is $P_X \times ...\times P_X$.Therefore, the conditional probability of correct decision is,

\begin{align}
\begin{split}
P(\hat{m}=1|y)&= \sum_{l=0}^{M-1} \binom{M-1}{l} \frac{1}{1+l}w^lz^{M-1-l}\\
&\stackrel{(a)}{=}\sum_{l=0}^{M-1} \binom{M}{l+1} \frac{1}{M}w^lz^{M-1-l}\\
&= \frac{1}{wM}\sum_{l=0}^{M-1} \binom{M}{l+1} w^{l-1}z^{M-1-l}\\
&\stackrel{(b)}{=} \frac{1}{wM}\sum_{k=1}^{M} \binom{M}{k} w^{k}z^{M-k}\\
&\stackrel{(c)}{=}\frac{(w+z)^M-z^M}{wM}
\end{split}
\label{condprob}
\end{align}
where $(a)$ comes from the absorption identity of binomial coefficients \cite{10.5555/562056}, $(b)$ comes from change of variables, and $(c)$ comes from the binomial theorem. Averaging (\ref{condprob}) with respect to $(c_1,y)$ jointly distributed as $P_{XY}$ we obtain equation (\ref{boundrc}).

\end{proof}

Theorem \ref{rc} assumes an arbitrary memoryless channel. However, in the case which the channel is continuous and the resulting distances are continuous random variables the bound simplifies as follows.

\begin{theorem}
Denote by $\epsilon(c_1,...,c_M)$ the error probability achieved by the minimum distance decoder with codebook $(c_1,...,c_M)$. Let $X_1,...,X_M$ be independent with marginal continuous distribution $P_X$. Then 
\begin{align}
\mathbb{E}[\epsilon(X_1,...,X_M)]=1-\mathbb{E}[z^{M-1}]
\end{align}
where
\begin{align}
z=P[d(\bar{X},Y)>d(X,Y)|X,Y]
\end{align}
with
\begin{align}
P_{XY\bar{X}}(a,b,c)=P_X(a)P_{Y|X}(b|a)P_X(c).
\end{align}
\label{rc2}
\end{theorem}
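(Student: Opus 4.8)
The plan is to specialize Theorem \ref{rc} to the continuous setting. The key observation is that when $d(\bar X,Y)$ is a continuous random variable given $(X,Y)$, the event $\{d(\bar X,Y)=d(X,Y)\}$ has probability zero, so the tie probability $w=P(d(\bar X,Y)=d(X,Y)\mid X,Y)$ vanishes almost surely. The entire task then reduces to showing that the bracketed quantity $\frac{(w+z)^M-z^M}{wM}$ from the general bound collapses to $z^{M-1}$ once $w\to 0$.

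The immediate obstacle is that $w$ sits in the denominator, so naively substituting $w=0$ produces the indeterminate form $0/0$; the simplification cannot be read off directly. I would instead treat the bracket as a function of $w$ and compute its limit as $w\to 0^+$. Recognizing the numerator as a difference quotient gives
\begin{align}
\lim_{w\to 0^+}\frac{(w+z)^M-z^M}{wM}=\frac{1}{M}\,\frac{d}{dw}(w+z)^M\Big|_{w=0}=\frac{1}{M}\cdot M\,z^{M-1}=z^{M-1},
\end{align}
which is exactly the claimed form. A single application of L'Hôpital's rule in the variable $w$ yields the same value, so this step is short once the indeterminacy is acknowledged.

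Alternatively, and perhaps more transparently, I would re-derive the conditional probability of correct decoding from scratch under the no-tie hypothesis, reusing the decoding-event analysis in the proof of Theorem \ref{rc}. When ties occur with probability zero, the minimum distance decoder is correct precisely when all $M-1$ competing codewords lie strictly farther from $y$ than $c_1$. By the i.i.d.\ structure of the codewords this event has conditional probability $z^{M-1}$, so $P(\hat m=1\mid y)=z^{M-1}$, and averaging over $(c_1,y)\sim P_{XY}$ gives $\mathbb{E}[\epsilon(X_1,\dots,X_M)]=1-\mathbb{E}[z^{M-1}]$.

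The main difficulty is entirely concentrated in handling the $w$ in the denominator: once the limit $w\to 0$ is resolved (or circumvented by the direct combinatorial argument), the result is immediate. The only subsidiary point worth checking is that the continuity hypothesis on $P_X$ and on the distance metric legitimately forces $w=0$ almost surely, so that this reduction passes cleanly through the outer expectation; this is routine and does not require new machinery.
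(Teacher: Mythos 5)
Your proposal is correct and follows essentially the same route as the paper: specialize Theorem \ref{rc}, observe that continuity forces $w=0$, and resolve the resulting $0/0$ indeterminacy by taking the limit $w\to 0$ via L'H\^opital's rule (equivalently, recognizing the difference quotient of $(w+z)^M$), yielding $z^{M-1}$. Your alternative direct argument---that with no ties the decoder is correct exactly when all $M-1$ competitors are strictly farther, giving $z^{M-1}$ immediately---is a valid and arguably cleaner shortcut, but the main line of reasoning coincides with the paper's proof.
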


\begin{proof}
We apply Theorem \ref{rc}. Since $d(\bar{X},Y)$ and $d(X,Y)$ are continuous random variables we have
\begin{align}
w=P[d(\bar{X},Y)=d(X,Y)|X,Y]=0.
\end{align}
Using L'Hospital's rule we have the following
\begin{align}
\lim_{w \rightarrow 0}\frac{(w+z)^M-z^M}{wM} = \lim_{w \rightarrow 0}\frac{M(w+z)^{M-1}}{M}=z^{M-1}
\end{align}
\end{proof}

This result is very useful because it simplifies the bound for continuous channels without relaxing it. The potential of this simplification is also referred to in \cite{6620522} and utilized in \cite{9328341}. We include it in this paper for completeness.

%\section{Random coding union bound}
%
%The numerical evaluation of the aforementioned bounds becomes challenging due to the arithmetic underflow of exponentiation with extremely large exponents. A good relaxation is to use the union bound as in \cite{poly} or equivalently Bernoulli's inequality as in \cite{6767457}.
%
%\begin{theorem}
%For an arbitrary $P_X$ there exist an $(M,\epsilon)$ code such that
%\begin{align}
%\epsilon \leq \mathbb{E}[\min\{1,(M-1)P(d(\bar{X},Y) \leq d(X,Y)|X,Y)\}]
%\end{align}
%where $P_{XY\bar{X}}(a,b,c)=P_X(a)P_{Y|X}(b|a)P_X(c)$.
%\label{rcu}
%\end{theorem}
%
%\begin{proof}
%Taking the first term of the sum of (\ref{condprob}),
%\begin{align}
%\begin{split}
%\mathbb{E}[\epsilon(X_1&,...,X_M)] \leq 1-\mathbb{E}[z^{M-1}] \\
%&= \mathbb{E}[1-z^{M-1}] \\
%&\stackrel{(a)}{=}\mathbb{E}[\min\{1,(M-1)(1-z)\}]\\
%&=\mathbb{E}[\min\{1,(M-1)P(d(\bar{X},Y) \leq d(X,Y)|X,Y)\}
%\end{split}
%\end{align}
%Equation $(a)$ comes from Bernoulli's inequality and the fact that $1-Z^{M-1} \leq 1$.
%\end{proof}

\section{Binary Discrete Channels}
\label{binary}
This section particularizes Theorem \ref{rc} for two important binary memoryless discrete channels, namely the binary symmetric channel and the binary erasure channel. The resulting achievability bounds are the same as the bounds of \cite[Theorem 32, Theorem 36]{poly} since information density is a function of minimum distance in these specific channels. However, the important difference is that our bounds can be computed in polynomial time in contrast to the original ones that have exponential time complexity.

\begin{theorem}
For the BSC with error probability $\delta$, we have
\begin{align}
\begin{split}
\mathbb{E}&[\epsilon(X_1,...,X_M)]\\
&=1-\sum_{i=0}^n \delta^i(1-\delta)^{n-i}\frac{\big(\sum_{j=i}^n\binom{n}{j}\big)^M-\big(\sum_{j=i+1}^n\binom{n}{j}\big)^M}{M 2^{nM-n}}
\end{split}
\end{align}
\end{theorem}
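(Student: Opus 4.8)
The plan is to apply Theorem \ref{rc} with the Hamming distance as the metric $d$ and with $P_X$ taken to be the uniform distribution over $\{0,1\}^n$, which is the natural random-coding ensemble for the BSC. The core of the argument is to show that the conditional quantities $w$ and $z$ depend on the pair $(X,Y)$ only through the single integer $i=d(X,Y)$, the number of channel errors, and then to average over the binomial law of $i$.

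First I would condition on $d(X,Y)=i$. Under the BSC each of the $n$ bits is flipped independently with probability $\delta$, so the number of flipped bits is binomial and $P(d(X,Y)=i)=\binom{n}{i}\delta^i(1-\delta)^{n-i}$. Next I would analyze $d(\bar{X},Y)$: since $\bar{X}$ is uniform over $\{0,1\}^n$ and independent of $Y$, and since exactly $\binom{n}{j}$ binary strings lie at Hamming distance $j$ from any fixed $y$, the distance $d(\bar{X},Y)$ takes value $j$ with probability $\binom{n}{j}/2^n$, \emph{independently} of the realization of $Y$. This is the crucial observation: it collapses the conditioning on $(X,Y)$ down to conditioning on $i$ alone. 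Consequently, given $d(X,Y)=i$,
\begin{align}
w=\frac{\binom{n}{i}}{2^n}, \quad z=\frac{\sum_{j=i+1}^n\binom{n}{j}}{2^n}, \quad w+z=\frac{\sum_{j=i}^n\binom{n}{j}}{2^n}.
\end{align}

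Then I would substitute these into the expression $\tfrac{(w+z)^M-z^M}{wM}$ from Theorem \ref{rc}. The common factor $2^n$ in $w$, $z$, and $w+z$ produces a $2^{nM}$ in the denominator of the two $M$-th powers, while the reciprocal $1/w$ contributes $2^n$ to the numerator and $\binom{n}{i}$ to the denominator, yielding a conditional term equal to $\tfrac{(\sum_{j=i}^n\binom{n}{j})^M-(\sum_{j=i+1}^n\binom{n}{j})^M}{\binom{n}{i}\,M\,2^{nM-n}}$. Finally, averaging over $i$ with the weight $\binom{n}{i}\delta^i(1-\delta)^{n-i}$ cancels the $\binom{n}{i}$ in the denominator, and subtracting the resulting sum from $1$ gives the claimed formula.

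I do not anticipate a genuine obstacle, as this is a direct specialization of Theorem \ref{rc}. The one point requiring care is justifying the symmetry claim that $d(\bar{X},Y)$ has the stated distribution for every fixed $Y$, so that $w$ and $z$ depend on $(X,Y)$ only through $i$; this rests squarely on the uniform choice of $P_X$ together with the combinatorial count of strings at a given Hamming distance, and it is precisely what makes the $\binom{n}{i}$ cancellation possible.
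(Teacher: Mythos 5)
Your proposal is correct and follows essentially the same route as the paper: specialize Theorem \ref{rc} to the Hamming metric with uniform (Bernoulli$(0.5)$) codewords, observe that $d(X,Y)\sim\mathrm{Binomial}(n,\delta)$ while $d(\bar{X},Y)\sim\mathrm{Binomial}(n,1/2)$ independently of $Y$, substitute $w=\binom{n}{i}2^{-n}$ and $z=\sum_{j=i+1}^{n}\binom{n}{j}2^{-n}$, and average over $i$. The paper's derivation in (\ref{bsc25}) is exactly this computation, so nothing further is needed.
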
 
\begin{proof}
The appropriate distance metric for the BSC is the Hamming distance
\begin{align}
d(X,Y) = \sum_{i=1}^n X_i \oplus Y_i 
\end{align}
where $\oplus$ denotes the addition over $GF(2)$.

Note that since the codebook is random and each symbol  follows the Bernoulli$(0.5)$, the resulting $M$ distances are independent. Specifically, $d(X,Y)$ follows the Binomial$(n,\delta)$ and the rest $M-1$ distances follow the Binomial$(n,0.5)$. The final derivation of $\mathbb{E}[\epsilon(X_1,...,X_M)]$ can be found in (\ref{bsc25}) at the bottom of the next page.
\begin{figure*}[b]
\rule[1ex]{\textwidth}{0.1pt}
\begin{align}
\begin{split}
\mathbb{E}[\epsilon(X_1,...,X_M)]&=1-\sum_{i=0}^n \binom{n}{i}\delta^i(1-\delta)^{n-i}\frac{\big(\binom{n}{i}2^{-n}+\sum_{j=i+1}^n\binom{n}{j}2^{-n}\big)^M-\big(\sum_{j=i+1}^n\binom{n}{j}2^{-n}\big)^M}{\binom{n}{i}2^{-n}M}\\
&=1-\sum_{i=0}^n \binom{n}{i}\delta^i(1-\delta)^{n-i}\frac{\big(\sum_{j=i}^n\binom{n}{j}2^{-n}\big)^M-\big(\sum_{j=i+1}^n\binom{n}{j}2^{-n}\big)^M}{\binom{n}{i}2^{-n}M}\\
&=1-\sum_{i=0}^n \delta^i(1-\delta)^{n-i}\frac{\big(\sum_{j=i}^n\binom{n}{j}\big)^M-\big(\sum_{j=i+1}^n\binom{n}{j}\big)^M}{M 2^{nM-n}}
\end{split}
\label{bsc25}
\end{align}
\end{figure*}
\end{proof}

\begin{theorem}
For the BEC with erasure probability $\delta$, we have
\begin{align}
\begin{split}
\mathbb{E}[\epsilon(X_1&,...,X_M)]\\
&=1-\sum_{i=0}^n \binom{n}{i}\delta^i(1-\delta)^{n-i}\frac{1-(1-2^{i-n})^M}{2^{i-n}M}
\end{split}
\end{align}
\label{bec}
\end{theorem}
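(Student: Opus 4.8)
The plan is to particularize Theorem \ref{rc} by identifying the correct minimum-distance metric for the BEC and then computing the two conditional probabilities $w$ and $z$ in closed form. For the BEC I would take the distance $d(c,y)$ to be the number of positions at which the codeword $c$ disagrees with the non-erased symbols of $y$, with erased positions contributing nothing; this is the metric under which minimum-distance decoding coincides with maximum-likelihood decoding. The essential structural observation is that the transmitted codeword never disagrees with a non-erased symbol, so $d(X,Y)=0$ with certainty, while any distance satisfies $d(\bar X,Y)\ge 0$ deterministically.

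The computation of $w$ and $z$ then conditions on the number of erasures. First I would note that the number of erased positions is Binomial$(n,\delta)$, so the event of exactly $i$ erasures has probability $\binom{n}{i}\delta^i(1-\delta)^{n-i}$. Conditioned on $i$ erasures, a competing codeword $\bar X$ with i.i.d. Bernoulli$(0.5)$ symbols agrees with all $n-i$ non-erased symbols with probability $2^{i-n}$; hence $w=P(d(\bar X,Y)=0)=2^{i-n}$ and $z=P(d(\bar X,Y)>0)=1-2^{i-n}$.

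Crucially, because $d(X,Y)=0$ is the smallest attainable value, the two events $\{d(\bar X,Y)=d(X,Y)\}$ and $\{d(\bar X,Y)>d(X,Y)\}$ partition the probability space, so $w+z=1$. Substituting into the bracketed term of Theorem \ref{rc} collapses $(w+z)^M$ to $1$ and leaves the conditional quantity $\frac{1-(1-2^{i-n})^M}{2^{i-n}M}$ for a fixed number $i$ of erasures.

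Finally I would take the expectation over the number of erasures by weighting with the Binomial$(n,\delta)$ mass, which yields exactly the claimed sum. The only delicate point — and really the crux of the argument — is the modeling step: one must argue that the chosen metric makes minimum-distance and ML decoding equivalent and, in particular, that $d(X,Y)=0$ holds deterministically so that the strictly-closer event is empty. Once that is in place the algebra is immediate, since $w+z=1$ eliminates the only term that could have caused difficulty in the general formula of Theorem \ref{rc}.
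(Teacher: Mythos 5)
Your proposal is correct and follows essentially the same route as the paper: condition on the number of erasures $i$, compute $w=2^{i-n}$ and $z=1-2^{i-n}$ for a uniformly random competing codeword, note $w+z=1$ so that $(w+z)^M$ collapses to $1$ in Theorem \ref{rc}, and average over the Binomial$(n,\delta)$ erasure count. The only cosmetic difference is your convention of ignoring erased positions (so $d(X,Y)=0$) versus the paper's convention of counting each erasure as distance one for every codeword (so $d(X,Y)=k$); since this shifts all distances by the same constant, the comparison events and the resulting $w,z$ are identical.
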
 
\begin{proof}
Similarly to BSC, the appropriate metric is the Hamming distance. The distance of the correct codeword is equal to the number of erased symbols $k$,
\begin{align}
d(X,Y)=k.
\end{align}
The distance of the rest of the codewords is also a function of $k$,
\begin{align}
d(\bar{X},Y)=k + u
\end{align}
where $k\sim$ Binomial$(n,\delta)$ and $u\sim$ Binomial$(n-k,0.5)$. The final derivation of $\mathbb{E}[\epsilon(X_1,...,X_M)]$ can be found in (\ref{bec25}) at the bottom of the next page.

\begin{figure*}[b]
\rule[1ex]{\textwidth}{0.1pt}
\begin{align}
\begin{split}
\mathbb{E}[\epsilon(X_1,...,X_M)]&=1-\sum_{i=0}^n \binom{n}{i}\delta^i(1-\delta)^{n-i}\frac{\big(2^{-(n-i)}+\sum_{j=1}^{n-i}\binom{n-i}{j}2^{-(n-i)}\big)^M-\big(\sum_{j=1}^{n-i}\binom{n-i}{j}2^{-(n-i)}\big)^M}{2^{-(n-i)}M}\\
&=1-\sum_{i=0}^n \binom{n}{i}\delta^i(1-\delta)^{n-i}\frac{1-\big(\sum_{j=1}^{n-i}\binom{n-i}{j}2^{-(n-i)}\big)^M}{2^{-(n-i)}M}\\
&=1-\sum_{i=0}^n \binom{n}{i}\delta^i(1-\delta)^{n-i}\frac{1-(1-2^{i-n})^M}{2^{i-n}M}
\end{split}
\label{bec25}
\end{align}
\end{figure*}
\end{proof}

The derived bounds can be computed in polynomial time, however the arithmetic underflows causes by the exponentiation with extremely large exponents is a challenge. This is easily resolved by representing the terms of the sums on a logarithmic scale and converting the operations appropriately. As a proof of concept, we give the evaluation of Theorem \ref{bec} in Figure \ref{f1}. As expected, it outperforms the rest state-of-the-art achievability bounds.

\begin{figure}
\centering
\includegraphics[scale=0.57]{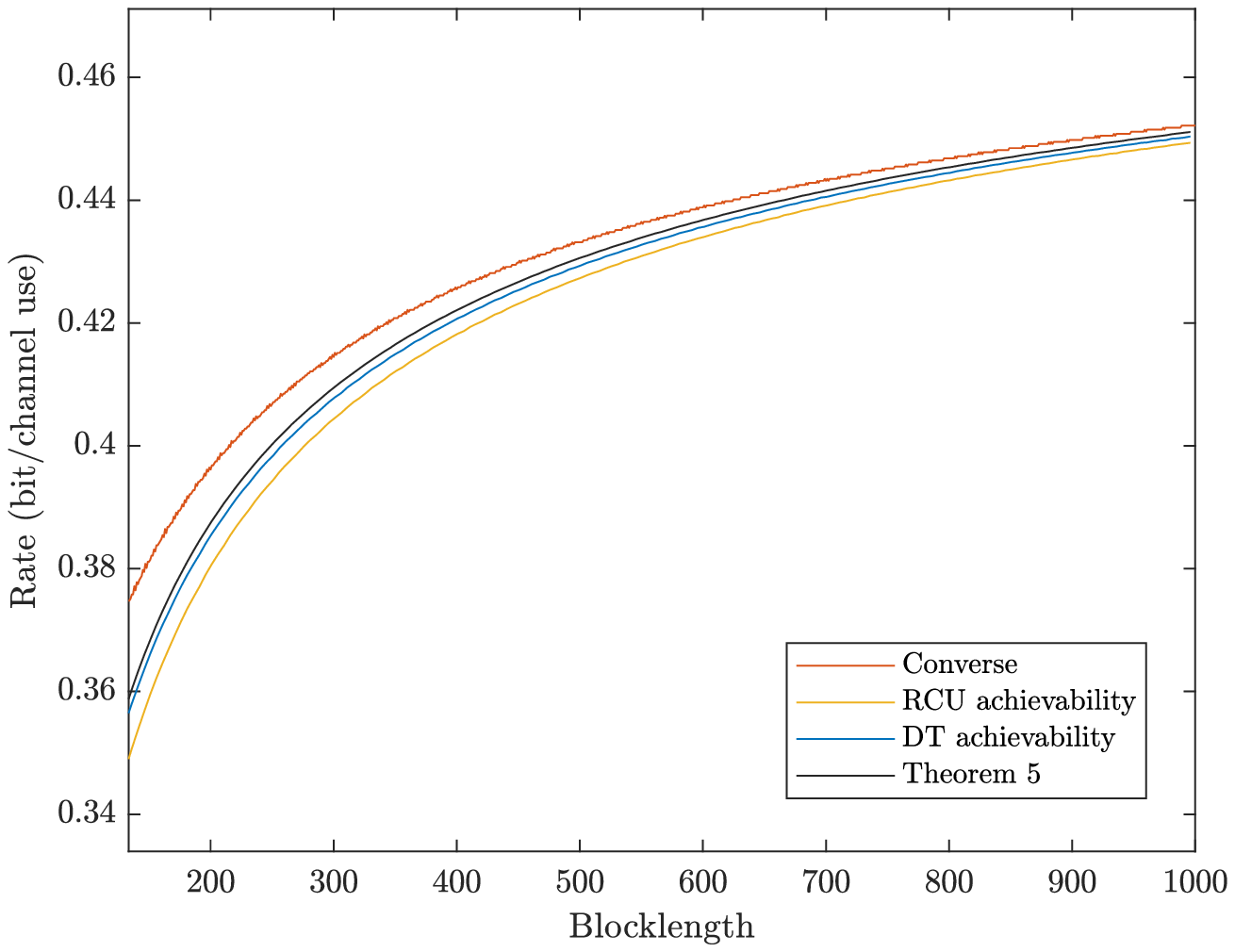}
\caption{Comparison of Theorem \ref{bec} with RCU and DT bounds for the BEC with erasure rate $\delta=0.5$ and average error probability $\epsilon = 10^{-3}$. The converse bound is the one in \cite[Theorem 38]{poly}}.
\label{f1}
\end{figure}

\section{The Gaussian Channel}
\label{gaussian}
The most fundamental and well studied continuous channel is the Gaussian channel. Specifically, it is a real-valued channel with additive Gaussian noise and a power constraint. In this section, we apply Theorem \ref{rc2} and derive its mathematical expressions for this channel.

For random coding over the Gaussian Channel, it is convenient to use the following definitions.
\begin{itemize}
\item Channel input $X_1,...X_n \sim \mathcal{N}(\boldsymbol 0,\textbf{I}_n)$,
\item Noise $Z_1,...Z_n \sim \mathcal{N}(\boldsymbol 0,\sigma_Z^2\textbf{I}_n)$,
\item Channel output $Y_1=X_1+Z_1,...,Y_n=X_n+Z_n$,
\item Signal-to-noise ratio $\gamma=1/\sigma_Z^2$.
\end{itemize} 
Note that the power constraint is the average over the ensemble of random codes that are produced with this process,
\begin{align}
\mathbb{E}\Bigg(\frac{1}{M}\sum_{i=1}^M \Vert c_i\Vert^2 \Bigg) = n.
\end{align}
This constraint is different from the one defined in \cite{6767457} for average codeword power. Although, it is a good approximation due to the law of large numbers since $M$ grows exponentially with $nR$. Thus, for sufficiently large $M$,
\begin{align}
\frac{1}{M}\sum_{i=1}^M \Vert c_i\Vert^2 \approx n.
\end{align}

\begin{theorem}
For the Gaussian channel with signal-to-noise ratio $\gamma$,
\begin{align}
\begin{split}
\mathbb{E}&[\epsilon(X_1,...,X_M)] \\
&= \int_0^{\infty}\int_0^{\infty}f_\Gamma(x;2^{-1}n,2\gamma^{-1})f_{X^2}(y;n,x)\\
&\qquad\qquad\qquad(1-(1-F_{X^2}(x;n,y))^{M-1})\,dx\,dy.
\end{split}
\end{align}
\label{gaus11}
\end{theorem}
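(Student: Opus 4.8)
The plan is to invoke Theorem~\ref{rc2}. Over the Gaussian channel minimum distance decoding uses the Euclidean distance, and since only the ordering of distances matters we may work with the squared Euclidean distance $d(x,y)=\Vert y-x\Vert^2$, whose induced statistics are continuous random variables. Theorem~\ref{rc2} then reduces the task to computing $z=P[d(\bar X,Y)>d(X,Y)\mid X,Y]$ and evaluating $1-\mathbb{E}[z^{M-1}]$. First I would identify the two scalar statistics on which $z$ depends. Writing $Z=Y-X$ for the noise, the correct codeword's distance is $d(X,Y)=\Vert Z\Vert^2$, and since the competing codeword $\bar X\sim\mathcal{N}(\boldsymbol 0,\textbf{I}_n)$ is independent of $(X,Y)$, the conditional law of $d(\bar X,Y)=\Vert Y-\bar X\Vert^2$ given $(X,Y)$ depends on $Y$ only, and in fact only on $\Vert Y\Vert^2$. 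This motivates setting $x=\Vert Z\Vert^2$ (the threshold against which $d(\bar X,Y)$ is compared) and $y=\Vert Y\Vert^2$ (the noncentrality governing $d(\bar X,Y)$).

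The key distributional computations are three. (i) Since $Z_i\sim\mathcal{N}(0,\sigma_Z^2)$, the variable $\Vert Z\Vert^2=\sigma_Z^2\sum_i(Z_i/\sigma_Z)^2$ equals $\sigma_Z^2$ times a $\chi^2_n$ variable, i.e.\ it is Gamma distributed with shape $n/2$ and scale $2\sigma_Z^2=2/\gamma$, yielding the factor $f_\Gamma(x;n/2,2/\gamma)$. (ii) Conditioned on $Y$, we have $Y-\bar X\sim\mathcal{N}(Y,\textbf{I}_n)$, so $d(\bar X,Y)$ is noncentral chi-squared with $n$ degrees of freedom and noncentrality $\Vert Y\Vert^2=y$; hence $z=P[d(\bar X,Y)>x\mid Y]=1-F_{X^2}(x;n,y)$. (iii) For the joint law of $(x,y)$ I would condition on the noise: given $Z$ we have $Y=X+Z\sim\mathcal{N}(Z,\textbf{I}_n)$, so $\Vert Y\Vert^2$ is noncentral chi-squared with $n$ degrees of freedom and noncentrality $\Vert Z\Vert^2$. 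By the rotational invariance of the isotropic Gaussian this conditional law depends on $Z$ only through $\Vert Z\Vert^2=x$, so conditioning on $\{\Vert Z\Vert^2=x\}$ gives the factor $f_{X^2}(y;n,x)$, and the joint density of $(\Vert Z\Vert^2,\Vert Y\Vert^2)$ factors as $f_\Gamma(x;n/2,2/\gamma)\,f_{X^2}(y;n,x)$.

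Assembling these, $\mathbb{E}[z^{M-1}]$ becomes the double integral over $(0,\infty)^2$ of $f_\Gamma(x;n/2,2/\gamma)\,f_{X^2}(y;n,x)\,(1-F_{X^2}(x;n,y))^{M-1}$. Writing the constant $1$ in $1-\mathbb{E}[z^{M-1}]$ as the integral of the same joint density (which integrates to one) and combining the two integrands under a single integral sign produces the bracket $1-(1-F_{X^2}(x;n,y))^{M-1}$, which is exactly the claimed expression.

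The step I expect to be the main obstacle is (iii): verifying that the conditional distribution of $\Vert Y\Vert^2$ given the noise depends on $Z$ only through its norm, so that the joint density of $(\Vert Z\Vert^2,\Vert Y\Vert^2)$ legitimately factors into the stated marginal-times-conditional form. This is where the rotational symmetry of the isotropic Gaussian input does the real work. One must also track carefully that both noncentral chi-squared statistics have unit per-coordinate variance, inherited from the unit-variance input $X$ and competitor $\bar X$, whereas the correct-codeword distance carries the noise variance $\sigma_Z^2=1/\gamma$; this asymmetry is precisely what distinguishes the Gamma factor from the two chi-squared factors.
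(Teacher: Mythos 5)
Your proposal is correct and follows essentially the same route as the paper: apply Theorem~\ref{rc2} with the squared Euclidean distance, identify $\Vert Z\Vert^2$ as Gamma$(n/2,\,2/\gamma)$, identify $d(\bar X,Y)$ given $Y$ as noncentral chi-squared with noncentrality $\Vert Y\Vert^2$, and identify $\Vert Y\Vert^2$ given the noise as noncentral chi-squared with noncentrality $\Vert Z\Vert^2$, then integrate against the joint density. Your write-up is in fact more explicit than the paper's about why the conditional laws depend only on the norms (rotational invariance), a point the paper uses implicitly.
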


\begin{proof}
The input, output, and noise vectors of the Guassian channel exist in the $n$-dimensional real space $\mathbb{R}^n$. Therefore, the appropriate distance metric is the Euclidean distance,
\begin{align}
d_E(X,Y) = \sqrt{(X_1-Y_1)^2+...+(X_n-Y_n)^2}.
\end{align}

Since the distances are just compared, it is convenient to use the squared Euclidean distance because many distributions that describe it are readily available and the result  remains the same,
\begin{align}
d(X,Y) = (X_1-Y_1)^2+...+(X_n-Y_n)^2.
\end{align}

The codewords are independent, identically distributed, and follow the multivariate Gaussian distribution $\mathcal{N}(\boldsymbol 0,\textbf{I}_n)$. The noise vector has a multivariate Gaussian distribution $\mathcal{N}(\boldsymbol 0,\gamma^{-1}\textbf{I}_n)$. Again, we use standard normal random codewords for convenience without loss of generality.

We apply Theorem \ref{rc2}. Note that $\lambda_Z \triangleq d(X,Y)$ follows the scaled chi-squared distribution or equivalently the gamma distribution with shape parameter $2^{-1}n$ and scale parameter $2\gamma^{-1}$. Additionally, $J_{\lambda_Y}\triangleq d(\bar{X},Y)$ follows the non-central chi-squared distribution with non-centrality parameter
\begin{align}
\lambda_Y = \sum_{i=1}^{n}Y_i^2.
\end{align}
Lastly, $\lambda_Y$ follows the non-central chi-squared distribution with non-centrality parameter
\begin{align}
\lambda_Z = \sum_{i=1}^{n}Z_i^2.
\end{align}
Therefore,
\begin{align}
\begin{split}
\mathbb{E}&[\epsilon(X_1,...,X_M)]\\
&=1-\mathbb{E}[(P[d(\bar{X},Y)>d(X,Y)|X,Y])^{M-1}] \\
&=\mathbb{E}[1-(1 - P[d(\bar{X},Y)\leq d(X,Y)|X,Y])^{M-1}]\\
&=\mathbb{E}[1-(1 - P[J_{\lambda_Y}\leq \lambda_Z|\lambda_Y,\lambda_Z])^{M-1}]\\
&= \int_0^{\infty}\int_0^{\infty}f_\Gamma(x;2^{-1}n,2\gamma^{-1})f_{X^2}(y;n,x)\\
&\qquad\qquad\qquad(1-(1-F_{X^2}(x;n,y))^{M-1})\,dx\,dy.
\end{split}
\end{align}
\end{proof}

An equivalent evaluation of the average probability of error of this random coding ensemble can be found in \cite{9328341}, where the author utilizes the radial and tangential components of the received codeword.

Again, evaluating numerically this integration is challenging due to the exponentiation. Even though it is possible to approximate it using operations with logarithmic probabilities, it is easier to derive two simple lower and upper bounds and compare their tightness.

\begin{theorem}
For the Gaussian channel with signal-to-noise ratio $\gamma$,
\begin{align}
\begin{split}
\mathbb{E}[\epsilon(X_1&,...,X_M)]\\
&\leq \int_0^{\infty}\int_0^{\infty}f_\Gamma(x;2^{-1}n,2\gamma^{-1})f_{X^2}(y;n,x)\\
&\qquad\qquad\quad\min\{1, (M-1)F_{X^2}(x;n,y)\}\,dx\,dy,
\end{split}
\label{rcuG}
\end{align}

\begin{align}
\begin{split}
\mathbb{E}&[\epsilon(X_1,...,X_M)]\\
&\geq \int_0^{\infty}\int_0^{\infty}f_\Gamma(x;2^{-1}n,2\gamma^{-1})f_{X^2}(y;n,x)\\
&\qquad\bigg(1-\frac{1}{1-(M-1)\log(1-F_{X^2}(x;n,y))}\bigg)\,dx\,dy.
\end{split}
\label{upperG}
\end{align}
\end{theorem}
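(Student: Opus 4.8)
The plan is to derive both bounds directly from the exact expression in Theorem~\ref{gaus11} by establishing pointwise inequalities on the integrand and then integrating against the nonnegative weight $f_\Gamma(x;2^{-1}n,2\gamma^{-1})\,f_{X^2}(y;n,x)$. The only factor in the integrand that involves $M$ is $h \triangleq 1-(1-p)^{M-1}$, where I abbreviate $p \triangleq F_{X^2}(x;n,y) = P[d(\bar X,Y)\le d(X,Y)\mid X,Y] \in [0,1]$. Since the weight functions are nonnegative, any pointwise bound $A(p) \le h \le B(p)$ integrates to the corresponding bound on $\mathbb{E}[\epsilon(X_1,\ldots,X_M)]$, so the whole problem reduces to bounding the scalar function $h$ of $p$.

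For the upper bound \eqref{rcuG}, I would prove $h \le \min\{1,(M-1)p\}$. The bound $h \le 1$ is immediate because $(1-p)^{M-1}\ge 0$. The bound $h\le (M-1)p$ is Bernoulli's inequality: for $p\in[0,1]$ and $M-1\ge 1$ one has $(1-p)^{M-1}\ge 1-(M-1)p$, hence $1-(1-p)^{M-1}\le (M-1)p$. This is exactly the union-bound relaxation and yields an RCU-type bound for the Gaussian channel after integration.

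For the lower bound \eqref{upperG}, the target pointwise inequality is $(1-p)^{M-1}\le \frac{1}{1-(M-1)\log(1-p)}$, which upon subtracting from $1$ reproduces the stated integrand. I would prove it by the substitution $t\triangleq -\log(1-p)\ge 0$ and $a\triangleq M-1\ge 1$, so that $(1-p)^{M-1}=e^{-at}$ and the claim becomes $e^{-at}\le \frac{1}{1+at}$, i.e. $1+at\le e^{at}$. This is the elementary inequality $1+x\le e^x$ evaluated at $x=at\ge 0$, which also guarantees that the denominator $1-(M-1)\log(1-p)=1+at>0$, so the bound is well defined. Equivalently, one can check that $f(q)\triangleq q^{a}(1-a\log q)$ satisfies $f'(q)=-a^2q^{a-1}\log q\ge 0$ on $(0,1]$ with $f(1)=1$, hence $f\le 1$. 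Integrating this inequality against the weight completes the lower bound.

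Both bounds are routine once the integrand is isolated; the main step is recognizing the lower bound as a disguised form of $1+x\le e^x$. The only point requiring care is confirming that $p=F_{X^2}(x;n,y)$ genuinely ranges over $[0,1]$ and that $\log(1-p)$ and the denominator stay in the valid regime on the support of the weight, so that the pointwise inequalities hold throughout the domain of integration.
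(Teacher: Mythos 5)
Your proof is correct and follows essentially the same route as the paper: the upper bound via Bernoulli's inequality together with the trivial bound $1-(1-p)^{M-1}\le 1$, and the lower bound via the substitution $a=-\log(1-F_{X^2}(x;n,y))$ reducing the claim to $1+x\le e^x$ (the paper phrases this as truncating the Taylor series of $e^{(M-1)a}$ at the linear term, which is the identical inequality). Your explicit attention to the pointwise-to-integral step and the well-definedness of the denominator is a minor polish on the paper's terser argument, not a different approach.
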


\begin{proof}
Inequality (\ref{rcuG}) comes from Bernoulli's inequality and the fact that $1-(1-F_{X^2}(x;n,y))^{M-1} \leq 1$. For inequality (\ref{upperG}), let
\begin{align}
a = -\log(1-F_{X^2}(x;n,y)).
\end{align}
Then
\begin{align}
\begin{split}
1-(1-F_{X^2}(x;n,y))^{M-1} &= 1-\frac{1}{e^{(M-1)a}}\\
&=1-\frac{1}{\sum_{k=0}^{\infty}\frac{((M-1)a)^k}{k!}}\\
&\geq1-\frac{1}{\sum_{k=0}^{1}\frac{((M-1)a)^k}{k!}}\\
&= 1-\frac{1}{1+(M-1)a}.
\end{split}
\label{b39}
\end{align}
\end{proof}
Bound (\ref{rcuG}) can be seen as a random coding union bound. In Figure \ref{f2}, bounds (\ref{rcuG}) and (\ref{upperG}) are plotted for signal-to-noise ratio $\gamma=1$ (0 dB) and average error probability $\epsilon = 10^{-3}$. For this setting of parameters, bounds (\ref{rcuG}) and (\ref{upperG}) are very close. Evaluation of more terms of the sum in (\ref{b39}) can provide excellent approximations of Theorem \ref{gaus11}.

\begin{figure}
\centering
\includegraphics[scale=0.57]{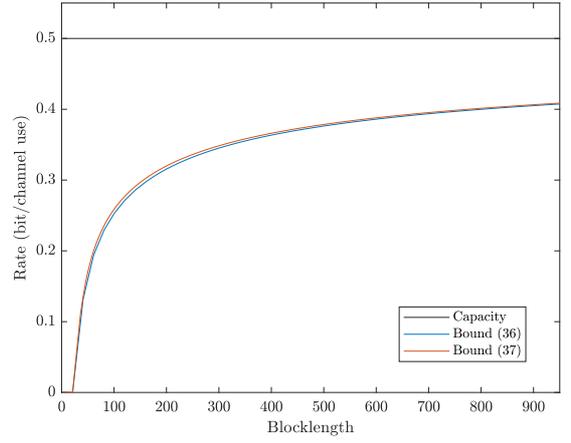}
\caption{Comparison of bounds (\ref{rcuG}) and (\ref{upperG}) for the Gaussian channel with SNR $\gamma=1$ (0 dB) and average error probability $\epsilon = 10^{-3}$.} 
\label{f2}
\end{figure}

\section{Conclusions}
\label{concl}
This work presents an achievability bound that evaluates the exact probability of error of an ensemble of random codes that are decoded by a minimum distance decoder. Compared to the state-of-the-art which demands exponential computation time, this bound is evaluated in polynomial time. This improvement in complexity is also attainable for the original bound that utilizes an information density decoder. The general bound is applied for the BSC, BEC, and the Gaussian channel. The numerical evaluation for the BEC verifies the higher achievable rate compared to the relaxations of RCU and DT bounds. For the Gaussian channel, upper and lower bounds to the exact probability of error are derived. These bounds are very close in the presented setting. The rationale of the minimum distance as a decoding metric can be valuable for other applications, such as variable length coding with feedback, especially for the Gaussian channel.

\section*{Acknowledgment}
This work is supported by the Engineering and Physical Sciences Research Council (EP/L016656/1) and the University of Bristol.

\bibliography{rc_min_distance}

\begin{thebibliography}{1}

\bibitem{6767457}
C.~E. Shannon, ``Probability of error for optimal codes in a gaussian
  channel,'' {\em The Bell System Technical Journal}, vol.~38, no.~3,
  pp.~611--656, 1959.

\bibitem{slep}
D.~Slepian, ``Bounds on communication,'' {\em Bell System Technical Journal},
  vol.~42, no.~3, pp.~681--707, 1963.

\bibitem{poly}
Y.~{Polyanskiy}, H.~V. {Poor}, and S.~{Verdu}, ``Channel coding rate in the
  finite blocklength regime,'' {\em IEEE Transactions on Information Theory},
  vol.~56, pp.~2307--2359, May 2010.

\bibitem{6620522}
E.~Haim, Y.~Kochman, and U.~Erez, ``The importance of tie-breaking in
  finite-blocklength bounds,'' in {\em 2013 IEEE International Symposium on
  Information Theory}, pp.~1725--1729, 2013.

\bibitem{10.5555/562056}
R.~L. Graham, D.~E. Knuth, and O.~Patashnik, {\em Concrete Mathematics: A
  Foundation for Computer Science}.
\newblock USA: Addison-Wesley Longman Publishing Co., Inc., 2nd~ed., 1994.

\bibitem{9328341}
R.~R. Müller, ``On approximation, bounding \& exact calculation of average
  block error probability for random code ensembles,'' {\em IEEE Transactions
  on Communications}, vol.~69, no.~5, pp.~2987--2996, 2021.

\end{thebibliography}

\bibliographystyle{ieeetr}

\end{document}